\documentclass[letterpaper, 10 pt, conference]{ieeeconf}
\usepackage[utf8]{inputenc}

\IEEEoverridecommandlockouts

\usepackage{balance}

\usepackage{multirow}

\usepackage{algorithm}
\usepackage{tcolorbox}

\usepackage{amsmath,mathtools}  
\usepackage{amsfonts,amssymb}
\renewcommand{\QED}{\QEDopen}

\usepackage{bbm}
\usepackage{xspace}    

\usepackage{graphicx}
\graphicspath{{figs/}}

\usepackage{subcaption}
\usepackage{caption}

\usepackage{hyperref}
\hypersetup{colorlinks=true, linkcolor= blue, allcolors=blue, citecolor = blue, filecolor=black, urlcolor=blue}

\usepackage{todonotes}
\usepackage{comment}

\usepackage{tabularx}

\usepackage{diagbox}
\newcolumntype{K}[1]{>{\centering\arraybackslash}p{#1}}

\makeatletter
\newcommand{\multiline}[1]{%
  \begin{tabularx}{\dimexpr\linewidth-\ALG@thistlm}[t]{@{}X@{}}
    #1
  \end{tabularx}
}
\makeatother

\usepackage[per-mode=symbol]{siunitx}

\usepackage{cite}  
\usepackage{array}  
\usepackage{booktabs}           

\makeatletter
\def\th@plain{%
  \itshape 
}
\def\@begintheorem#1#2{\trivlist
   \item[\hskip\labelsep\bfseries #1\ #2.]} 
\def\@opargbegintheorem#1#2#3{\trivlist
   \item[\hskip\labelsep\bfseries #1\ #2\ (#3).]} 
\makeatother

\newtheorem{theorem}{Theorem}

\newtheorem{assumption}{Assumption}

\newcounter{remark}
\newenvironment{remark}{%
\par\vspace{3pt}\noindent\refstepcounter{remark}\textbf{Remark~\theremark:}}%
{\par\endtrivlist\unskip}

\newcounter{problem}
\newenvironment{problem}{%
\par\vspace{3pt}\noindent\refstepcounter{problem}\textbf{Problem~\theproblem:}}%
{\par\endtrivlist\unskip}

\usepackage[extramath,probability,reviewmark]{mylatexdefs}

\usepackage{stfloats}

\usepackage{hyperref}
\usepackage{graphicx}
\usepackage{amsmath,mathtools} 
\usepackage{mathrsfs}
\usepackage{amsfonts,amssymb}
\usepackage{upgreek}
\usepackage{bbm}
\usepackage{dsfont}
\usepackage{algorithm}
\usepackage{algpseudocode}

\usepackage{tikz}
\usetikzlibrary{arrows,positioning,plotmarks,arrows.meta}

\usepackage{pgfplots}
\usepgfplotslibrary{patchplots,groupplots}
\usepackage{grffile}

\pgfplotsset{compat=newest}  

\pgfplotsset{
  every axis/.append style={
    width=\columnwidth,
    height=0.7\columnwidth,
    label style={font=\large},
    ticklabel style={font=\normalsize},
    title style={font=\Large},
    legend style={font=\normalsize}
  }
}
\usepackage{mathrsfs}
\usetikzlibrary{arrows}

\newcommand{\rostwo}{ROS\,2}
\newcommand{\limo}{\texttt{LIMO}}
\newcommand{\xrealdot}{\ensuremath{\dot{\hat{x}}}}
\newcommand{\xreal}{\ensuremath{\hat{x}}}
\newcommand{\xmodeldot}{\ensuremath{\dot{x}}}
\newcommand{\xmodel}{\ensuremath{x}}
\newcommand{\Real}{\ensuremath{\mathbb{R}}}
\newcommand{\statespace}{\ensuremath{\mathbb{R}^n}}
\newcommand{\controlspace}{\ensuremath{\mathcal{U}}}

\newcommand{\Jactual}{\ensuremath{J_{\mathrm{act}}}}
\newcommand{\Jmodel}{\ensuremath{J_{\mathrm{mod}}}}
\newcommand{\vref}{\ensuremath{v_{\mathrm{ref}}}}

\begin{document}

\title{\LARGE \bf Safety-Constrained Optimal Control for Unknown System Dynamics}
\author{Panagiotis Kounatidis$^{1}$, {\IEEEmembership{Student Member, IEEE}}, and Andreas A. Malikopoulos$^{2}$, {\IEEEmembership{Senior Member, IEEE}}
\thanks{This research was supported in part by NSF under Grants CNS-2401007, CMMI-2348381, IIS-2415478, and in part by MathWorks.}
\thanks{$^{1}$Systems Engineering Program, Cornell University, Ithaca, NY, USA.}
\thanks{$^{2}$Andreas A. Malikopoulos is with the Applied Mathematics, Systems Engineering, Mechanical Engineering, Electrical \& Computer Engineering, and School of Civil \& Environmental Engineering, Cornell University, Ithaca, NY, USA.}
\thanks{Emails: {\tt\small \{pk586,amaliko\}@cornell.edu}.}
}
\maketitle

\begin{abstract}
In this paper, we present a framework for solving continuous optimal control problems when the true system dynamics are approximated through an imperfect model. We derive a control strategy by applying Pontryagin's Minimum Principle to the model-based Hamiltonian functional, which includes an additional penalty term that captures the deviation between the model and the true system. We then derive conditions under which this model-based strategy coincides with the optimal control strategy for the true system under mild convexity assumptions. We demonstrate the framework on a real robotic testbed for the cruise control application with safety distance constraints.
\end{abstract}


\section{Introduction}
Optimal control \cite{kirk2004}, \cite{bryson1975applied} seeks to derive a control strategy that minimizes a cost function which encodes desired behavior from our system of interest while meeting its physical constraints. In searching of this optimal control strategy, the physical dynamics of the system and any constraints on its state and control inputs must be satisfied. This problem is generally a hard one and typically lacks closed-form analytical solutions except for a few classes of systems, for example, linear ones with quadratic cost functionals (LQR) \cite{bertsekas2017}. The main theoretical tools to tackle optimal control-the calculus of variations, Pontryagin's Minimum Principle (PMP), and dynamic programming-all assume a perfect model of the system dynamics.

Adding to the difficulties, in many control applications, e.g., autonomous driving, the system dynamics are often too complex or costly to model precisely. As a result, approximate models are used instead for control synthesis that act as proxies or digital twins for the real system dynamics. However, the underlying model mismatch imposes degradation of performance and potential implications on the robust operation of the closed-loop system \cite{skelton1989}, \cite{sagmeister2024}.

This raises the question: \emph{when does a control policy derived from an approximate model remain optimal for the true system?} Addressing this question requires identifying structural properties of optimal control problems that render them insensitive to model inaccuracies \cite{malikopoulos2026}.

\subsection{Related Work}
Several research directions aim to circumvent the need of a model overall and its entailed suboptimality by directly learning the optimal control strategy from data of the real system. Reinforcement learning (RL) methods do so by repeatedly generating full-horizon trajectories of the real system and updating the parameters of a control strategy-policy search \cite{sutton1999}, \cite{recht2019}- or learn the optimal cost-to-go function-approximate dynamic programming \cite{bertsekas1996}- or do both in an actor-critic structure \cite{vamvoudakis2010}, \cite{kiumarsi2018}. However, RL's episodic nature of learning can often hinder applications on real hardware and thus require a high-fidelity simulator where the control strategy is trained upon instead, thereby introducing the problem of model mismatch again.

On the other hand, adaptive control aims to improve closed-loop performance by exploring the state space online within a single episode \cite{ioannou2006}. For example, it has been shown that the Q-function of the LQR problem can be learned online via recursive least squares \cite{bradtke1994}. A key limitation of adaptive control frameworks is that they often rely on explicit model structures and update laws along with persistence of excitation.
In a closely related line of work, a theoretical foundation for integrating learning and optimal control in systems with unknown dynamics has been developed \cite{Malikopoulos2022a, Malikopoulos2024}, and its applicability has been demonstrated in the context of an LQR problem \cite{kounatidis2025combined}. This framework explicitly accounts for model mismatch through penalized, model-based optimal control formulations that capture deviations from the actual system. More recently, it has been shown that optimal control can often be achieved without exact model identification, provided that the learning process preserves the structural properties underlying the equivalence between model-based and plant-based decision making \cite{malikopoulos2026}.

\subsection{Contributions}
In this paper, we extend the results of \cite{malikopoulos2026} to include safety constraints on the optimal control problem that involve both the state and the control input. For this constrained setup, we derive the structural conditions under which the Hamiltonian minimizers of the model-based and plant-based problems coincide, implying equivalence of the resulting optimal control trajectories despite differences in system dynamics. To illustrate the equivalence and validate the framework, we apply it to a real robotic testbed for the cruise control application with safety distance constraints. The code is publicly available at \href{https://github.com/Panos20102k/Multi-Limo-Control}{https://github.com/Panos20102k/Multi-Limo-Control}. 
\subsection{Organization}

The paper is organized as follows. In Section~\ref{sec:problem-formulation}, we introduce the continuous optimal control problem with safety constraints and unknown dynamics and the penalized model-based approach. In Section~\ref{sec:hamiltonian-analysis}, we formulate the corresponding Hamiltonian systems and their optimality conditions. In Section~\ref{sec:equivalence-results}, we establish the equivalence results for a general class of Hamiltonian functions as well as those with quadratic control effort. In Section~\ref{sec:cruise-control}, we apply the framework presented for a cruise control experiment with real hardware. Finally, in Section~\ref{sec:conclusions} we provide concluding remarks and directions for future research.

\section{Problem Formulation}\label{sec:problem-formulation}
We consider the finite horizon optimal control problem for a continuous dynamical system whose exact dynamics are unknown and is required to satisfy some safety constraints.

\subsection{Modeling framework}
The evolution of the actual system (plant) is
\begin{equation}\label{eq:plant-dynamics}
	\xrealdot(t)  = \hat{f}(t, \xreal(t), u(t)), \quad \xreal(0) = \xmodel_0,
\end{equation}
and is constrained to satisfy 
\begin{equation}\label{eq:plant-constraint}
	c(t, \xreal(t), u(t)) \leq 0, \quad \mathrm{for\,\, all\,\,} t \in [0,T],
\end{equation}
where $\xreal(t) \in \statespace$, $u(t) \in \controlspace \subset \Real^m$ and $\hat{f} : [0,T] \times \statespace \times \controlspace \rightarrow \statespace$ is an unknown dynamics map which satisfies standard regularity conditions (e.g., Carath\'{e}odory conditions) such that for any admissible control $u(\cdot) \in \controlspace$, the system \eqref{eq:plant-dynamics} admits a unique absolutely continuous solution. The function $c : [0,T] \times \statespace \times \controlspace \rightarrow \Real$ is a known constraint map.
\begin{remark}
	Considering $c$ to be a scalar function is not restrictive, as any $l$-vector constraint function ($l\leq m$) can be written compactly in one scalar function through the unit Heaviside step function \cite{kirk2004}.
\end{remark}
We consider that $\xreal(t)$ is fully observed for all $t \in [0, T]$.

The model of the actual system that we have access to is given by
\begin{equation}\label{eq:model-dynamics}
	\xmodeldot(t)  = f(t, \xmodel(t), u(t)), \quad \xmodel(0) = \xmodel_0,
\end{equation}
and the corresponding constraint,
\begin{equation}\label{eq:model-constraint}
	c(t, \xmodel(t), u(t)) \leq 0, \quad \mathrm{for\,\, all\,\,} t \in [0,T],
\end{equation}
where $\xmodel(t) \in \statespace$ and $f : [0,T] \times \statespace \times \controlspace \rightarrow \statespace$ is a known dynamics map. 
\begin{remark}
	The model and the plant share the same initial condition $x_0$ and are driven by the same control input $u(\cdot)$. The model state $\xmodel(t)$ is available at all times.
\end{remark}

\subsection{Original optimal control problem for the actual system}
The performance of the actual system is evaluated through
\begin{equation}\label{eq:plant-cost-functional}
	\Jactual = \int_{0}^{T} \ell(t, \xreal(t), u(t))\,dt + \phi(\xreal(T)),
\end{equation}
where $\ell : [0,T] \times \statespace \times \controlspace \rightarrow \Real$ is the running cost and $\phi : \statespace \rightarrow \Real$ the terminal cost. The problem we want to address is given as follows: 
\begin{problem}\label{prob:plant-optimal}
	Minimize \Jactual\, over $u(\cdot) \in \controlspace$, subject to \eqref{eq:plant-dynamics} and \eqref{eq:plant-constraint}.
\end{problem}
However, the plant dynamics $\hat{f}$ are unknown, so Problem~\ref{prob:plant-optimal} cannot be solved directly.

\subsection{Model-based surrogate problem with penalized cost}
To overcome the lack of knowledge of $\hat{f}$, we construct a surrogate optimal control problem based on the known model dynamics \eqref{eq:model-dynamics}. The key idea is to augment the running cost with a penalty term that quantifies the discrepancy between the model state and the observed plant state. To this end, define 
\begin{align}\label{eq:model-cost-functional}
	\Jmodel \;&=\; \int_{0}^{T} \ell(t, \xmodel(t), u(t)) + \beta(t)||\xmodel(t) - \xreal(t)||^2\,dt \\
	&+ \phi(\xmodel(T)),
\end{align}
where $\beta : [0, T] \rightarrow \Real$ is a given time-varying weighting function. Then, we consider the following problem.
\begin{problem}\label{prob:model-based}
	Minimize \Jmodel\, over $u(\cdot) \in \controlspace$, subject to \eqref{eq:model-dynamics} and \eqref{eq:model-constraint}.
\end{problem}

\section{Hamiltonian Analysis and Optimality Conditions}\label{sec:hamiltonian-analysis}
In this section, we derive the optimality conditions for the original optimal control problem (Problem~\ref{prob:plant-optimal}) and the model-based penalized problem (Problem~\ref{prob:model-based}) through PMP. Throughout our exposition, we consider that the regularity conditions required for the application of PMP are satisfied. We also suppress the dependence of variables on time for clarity of exposition.

\subsection{Hamiltonian for the actual system}
The Hamiltonian functional associated with Problem~\ref{prob:plant-optimal} is 
\begin{equation}\label{eq:plant-hamiltonian}
	\hat{H}(t,\xreal,u,\hat{\lambda},\hat{\mu}) = \ell(t, \xreal, u) + \hat{\lambda}^{\mathrm{T}}\hat{f}(t,\xreal,u) + \hat{\mu}c(t,\xreal,u),
\end{equation}
where $\hat{\lambda} \in \statespace$ and $\hat{\mu} \in \Real$ are the costates and Lagrange multiplier associated with the plant dynamics and constraint, respectively.

Based on PMP, if $u^*(\cdot) \in \controlspace$ is an optimal control for Problem~\ref{prob:plant-optimal} with corresponding state trajectory $\xreal^*(\cdot)$, then there exists a continuous costate trajectory $\hat{\lambda}^*(\cdot)$ such that, for almost every $t \in [0,T]$,
\begin{align}\label{eq:pmp-plant-state}
    \xrealdot^* & = \hat{f}(t, \xreal^*, u), \quad \xreal^*(0) = \xmodel_0, \\
\label{eq:pmp-plant-costate}
    \dot{\hat{\lambda}}^* & =
    \left\{
    \begin{aligned}
        &-\nabla_{\xreal}\bigl[
            \ell(t, \xreal^*, u^*)
            + (\hat{\lambda}^*)^{\mathrm{T}}\hat{f}(t,\xreal^*,u^*) \\
        &\qquad\qquad
            + \hat{\mu}^*c(t,\xreal^*,u^*)
        \bigr], && c = 0, \\[4pt]
        &-\nabla_{\xreal}\bigl[
            \ell(t, \xreal^*, u^*)
            + (\hat{\lambda}^*)^{\mathrm{T}}\hat{f}(t,\xreal^*,u^*)
        \bigr], && c < 0,
    \end{aligned}
    \right.
\end{align}
with terminal condition $\hat{\lambda}^*(T) = \nabla_{\xreal}(\phi(\xreal^*(T)))$. Moreover, the optimal control satisfies the pointwise constrained minimization condition
\begin{equation}\label{eq:pmp-plant-control}
	u^* \in \operatorname*{arg\,min}_{u \in \controlspace} \hat{H}(t,\xreal^*,u,\hat{\lambda}^*,\hat{\mu}^*).
\end{equation}
For the inactive safety constraint case, $c <0$, we have $\hat{\mu}^*=0$ and \eqref{eq:pmp-plant-control} determines $u^*$. For $c=0$, \eqref{eq:plant-constraint} and \eqref{eq:pmp-plant-control} together determine $u^*$ and $\hat{\mu}^*$. The Lagrange multiplier $\hat{\mu}^*$ is needed for \eqref{eq:pmp-plant-costate}.
\begin{remark}
	If the safety constraints are of the form $c(t, \xreal) \leq 0$, i.e., not an explicit function of $u$, then we differentiate $c$ with respect to $t$ until its $q$-th derivative, $c^{(q)}(t, \xreal, u)$, depends explicitly on $u$, $q \geq 1$. The optimality conditions are then identical to \eqref{eq:pmp-plant-state}--\eqref{eq:pmp-plant-control} with $c^{(q)}$ substituted for $c$ and with the addition that for the active constraint case, the following ``tangency" conditions must also hold \cite{bryson1975applied},
    \begin{equation}\label{eq:pmp-plant-tangency}
        N(\xreal, t) \doteq [c(\xreal, t)\,\, \dot{c}(\xreal, t)\,\ldots \, c^{(q-1)}(\xreal,t)]^{\mathrm{T}} = 0.
    \end{equation}
\end{remark}

\subsection{Hamiltonian for the model-based penalized problem}
The Hamiltonian functional associated with Problem~\ref{prob:model-based} is 
\begin{align}\nonumber
	H(t,\xmodel,\xreal,u,\lambda,\mu) & = \ell(t, \xmodel, u) + \lambda^{\mathrm{T}}f(t,\xmodel,u) + \beta||\xmodel-\xreal||^2 \\ \label{eq:model-hamiltonian}
    & +\mu c(t,\xmodel,u),
\end{align}
where $\lambda \in \statespace$ and $\mu \in \Real$ are the costates and Lagrange multiplier associated with the model dynamics and constraint, respectively.

If $u^\circ(\cdot) \in \controlspace$ is an optimal control for Problem~\ref{prob:model-based} with corresponding state trajectory $\xmodel^\circ(\cdot)$, then there exists a continuous costate trajectory $\lambda^\circ(\cdot)$ such that, for almost every $t \in [0,T]$,
\begin{align}\label{eq:pmp-model-state}
    \xmodeldot^\circ & = f(t, \xmodel^\circ, u), \quad \xmodel^\circ(0) = \xmodel_0, \\
\label{eq:pmp-model-costate}
    \dot{\lambda}^\circ & =
    \left\{
    \begin{aligned}
        &-\nabla_{\xmodel}\bigl[
            \ell(t, \xmodel^\circ, u^\circ)
            + (\lambda^\circ)^{\mathrm{T}}f(t,\xmodel^\circ,u^\circ) \\
        &\qquad\qquad
            + \beta||\xmodel^\circ-\xreal||^2 + \mu^\circ c(t,\xmodel^\circ,u^\circ)
        \bigr], && c = 0, \\[4pt]
        &-\nabla_{\xmodel}\bigl[
            \ell(t, \xmodel^\circ, u^\circ) 
            + (\lambda^\circ)^{\mathrm{T}}f(t,\xmodel^\circ,u^\circ)
            \\
        &\qquad\qquad
            + \beta||\xmodel^\circ-\xreal||^2
        \bigr], && c < 0,
    \end{aligned}
    \right.
\end{align}
with terminal condition $\lambda^\circ(T) = \nabla_{\xmodel}(\phi(\xmodel^\circ(T)))$. Moreover, the optimal control satisfies the pointwise constrained minimization condition
\begin{equation}\label{eq:pmp-model-control}
	u^\circ \in \operatorname*{arg\,min}_{u \in \controlspace} H(t,\xmodel^\circ,\xreal,u,\lambda^\circ,\mu^\circ).
\end{equation}
If the constraints are not an explicit function of $u$, then we substitute $c^{(q)}(t, \xmodel, u)$ for $c$ in \eqref{eq:pmp-model-costate} and additionally require for the active constraint case that 
\begin{equation}\label{eq:pmp-model-tangency}
        N(\xmodel, t) \doteq [c(\xmodel, t)\,\, \dot{c}(\xmodel, t)\,\ldots \, c^{(q-1)}(\xmodel,t)]^{\mathrm{T}} = 0.
\end{equation}

\subsection{Constrained Hamiltonian minimization. Existence and uniqueness}
Next, we provide conditions under which the pointwise
Hamiltonian minimization problems that arise in \eqref{eq:pmp-plant-control} and \eqref{eq:pmp-model-control} admit minimizers. 

\begin{assumption}\label{assum:one}
    The admissible control set $\mathcal{U} \subset \mathbb{R}^m$ is
    nonempty, closed, and convex (not necessarily bounded).
\end{assumption}


\begin{assumption}\label{assum:two}
    For almost every $t \in [0,T]$ and for all relevant
    $(\hat{x},\hat{\lambda},\hat{\mu})$ and $(x,\hat{x},\lambda,\mu)$, the maps
    \[
    u \mapsto \hat{H}(t,\hat{x},u,\hat{\lambda},\hat{\mu})
    \quad \text{and} \quad
    u \mapsto H(t,x,\hat{x},u,\lambda,\mu),
    \]
    are proper, lower semicontinuous, and convex on $\mathcal{U}$. Moreover,
    they are coercive on $\mathcal{U}$, i.e.,
    \begin{align*}
        \|u\| & \to \infty,\; u \in \mathcal{U}
        \Longrightarrow \\
        \hat{H}(t,\hat{x},u,\hat{\lambda},\hat{\mu}) & \to +\infty
        \quad \text{and} \quad
        H(t,x,\hat{x},u,\lambda,\mu) \to +\infty.
    \end{align*}
\end{assumption}


\begin{theorem}\label{theorem:existence-and-uniqueness}
\textit{Suppose Assumptions \ref{assum:one}--\ref{assum:two} hold. Then, for almost every $t \in [0,T]$, the sets
of minimizers}
\[
\arg\min_{u \in \mathcal{U}} \hat{H}(t,\hat{x},u,\hat{\lambda},\hat{\mu})
\quad \text{and} \quad
\arg\min_{u \in \mathcal{U}} H(t,x,\hat{x},u,\lambda,\mu),
\]
\textit{are nonempty, closed, and convex. If, in addition, for almost every $t$ the Hamiltonians are
strictly convex in $u$ on $\mathcal{U}$ (e.g., $\alpha$-strongly convex), then these
minimizers are unique almost everywhere.}
\end{theorem}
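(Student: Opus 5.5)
The plan is to treat both Hamiltonians at once. Fix $t$ outside the null set where Assumption~\ref{assum:two} may fail, and fix the remaining arguments. Write $g:\mathcal{U}\to\Real\cup\{+\infty\}$ for either $u\mapsto \hat{H}(t,\xreal,u,\hat{\lambda},\hat{\mu})$ or $u\mapsto H(t,\xmodel,\xreal,u,\lambda,\mu)$. Extend $g$ by $+\infty$ outside $\mathcal{U}$. Since $\mathcal{U}$ is nonempty, closed and convex by Assumption~\ref{assum:one}, the extended function is still proper, lower semicontinuous and convex on all of $\Real^m$. The claim then reduces to the standard direct method of convex analysis.

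\textbf{Existence.} Since $g$ is proper, pick $u_0\in\mathcal{U}$ with $g(u_0)<\infty$ and consider the sublevel set $S=\{u\in\mathcal{U}: g(u)\le g(u_0)\}$. It is nonempty because it contains $u_0$. It is closed because $g$ is lower semicontinuous and $\mathcal{U}$ is closed. It is bounded by coercivity: otherwise there would be $u_k\in S$ with $\|u_k\|\to\infty$, forcing $g(u_k)\to+\infty$, which contradicts $g(u_k)\le g(u_0)$.

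So $S$ is compact. A lower semicontinuous function attains its infimum on a compact set (take a minimizing sequence, extract a convergent subsequence, and use $\liminf g(u_k)\ge g(\bar u)$). The minimizer over $S$ is a minimizer over $\mathcal{U}$, since points outside $S$ have value greater than $g(u_0)$. Hence the argmin is nonempty, and its minimum value $g^\star$ is finite.

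\textbf{Closedness and convexity.} Write $\arg\min g=\{u\in\mathcal{U}: g(u)\le g^\star\}$, which is a sublevel set. It is closed by lower semicontinuity and closedness of $\mathcal{U}$. It is convex because sublevel sets of a convex function on a convex set are convex: for $u_1,u_2$ in it and $\theta\in[0,1]$, we have $g(\theta u_1+(1-\theta)u_2)\le \theta g(u_1)+(1-\theta)g(u_2)\le g^\star$.

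\textbf{Uniqueness.} Assume strict convexity, and suppose $u_1\neq u_2$ are both minimizers. Their midpoint lies in $\mathcal{U}$ by convexity, and strict convexity gives $g(\tfrac12 u_1+\tfrac12 u_2)<g^\star$, a contradiction. Strong convexity implies strict convexity, so the parenthetical case is covered. Because all of this holds for every $t$ outside a null set, the statements hold for almost every $t$.

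\textbf{Main obstacle.} There is no deep step here. The only care needed is in the measure-theoretic bookkeeping: the "almost every $t$" must be the null set from Assumption~\ref{assum:two}, taken uniformly over the relevant $(\xreal,\hat{\lambda},\hat{\mu})$ and $(\xmodel,\xreal,\lambda,\mu)$. One also has to handle extended-valued properness correctly, so that the sublevel set is nonempty and the infimum is finite rather than $-\infty$. The finiteness of the infimum is exactly what the compactness argument supplies.
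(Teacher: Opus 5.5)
Your proposal is correct and is the standard direct-method argument: coercivity makes a nonempty sublevel set compact, lower semicontinuity gives attainment, convexity makes the argmin a closed convex sublevel set, and strict convexity gives uniqueness. This is what the paper relies on: it defers the proof of this theorem to \cite{malikopoulos2026} and runs the same minimizing-sequence argument in its proof of Lemma~1. Your explicit use of properness to get a finite infimum, and your verification that the argmin set is closed and convex, fill in details the paper leaves implicit.
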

\begin{proof}
See \cite{malikopoulos2026}.
\end{proof}

\section{Equivalence Results}\label{sec:equivalence-results}
All equivalence results in this section are pointwise in time
and rely on the structure of the instantaneous Hamiltonian
minimization problems induced by the two optimal control
formulations. We provide two complementary equivalences
results. The first is stated in a convex-analysis form (subdifferentials and normal cones) and accommodates nonsmooth costs,
unbounded control sets, and nonlinear dynamics, provided the pointwise Hamiltonian minimization problems are convex. The second result specializes in a commonly used structural setting
(quadratic control effort and mild growth conditions), which
yields simple, verifiable conditions for existence, uniqueness,
and equivalence.

\subsection{Convex analysis preliminaries}
Let $\,\mathcal{U} \subset \mathbb{R}^m$ be nonempty, closed, and convex. The normal
cone to $\mathcal{U}$ at $u \in \mathcal{U}$ is defined by
\[
N_\mathcal{U}(u) := \{\eta \in \mathbb{R}^m : \langle \eta, v-u \rangle \le 0,\; \forall v \in \mathcal{U}\}.
\]
For a proper, lower semicontinuous, convex function $\psi :
\mathbb{R}^m \to \mathbb{R} \cup \{+\infty\}$, the convex subdifferential at $u$ is denoted
by $\partial \psi(u)$.

We will use the standard fact that $u^\ast \in \arg\min_{u \in \mathcal{U}} \psi(u)$ if
and only if
\[
0 \in \partial \psi(u^\ast) + N_\mathcal{U}(u^\ast).
\]

\begin{remark}
Under Assumption \ref{assum:one} and convexity of $u \mapsto H(t,\cdot)$,
$u^\ast$ minimizes $H(t,\cdot)$ over $\mathcal{U}$ if and only if
\[
0 \in \partial_u H(t,u^\ast) + N_\mathcal{U}(u^\ast),
\]
where $\partial_u$ denotes the convex subdifferential and $N_\mathcal{U}$ is the
normal cone to $\mathcal{U}$. If $H$ is differentiable in $u$, this reduces to
the variational inequality
\[
\langle \nabla_u H(t,u^\ast), v-u^\ast\rangle \ge 0,
\qquad \forall v \in \mathcal{U}.
\]
\end{remark}

\subsection{General equivalence results}
\begin{theorem}\label{theorem:two}
    \textit{Suppose Assumptions \ref{assum:one}--\ref{assum:two} hold and let}
    \[
    \mathcal{C} := \{u \in \mathcal{U} : c(\cdot,u) \le 0\},
    \]
    \textit{where the safety constraint as a function of $u$, $c(\cdot,u)$, is proper, lower semicontinuous, and convex, and $\mathcal{C}$ is nonempty.}

    \textit{Fix any $t \in [0,T]$ for which the pointwise constrained Hamiltonian minimization problems are well posed, and define}
    \begin{align}
        \Psi_{\mathrm{act}}(u;t) &:= \hat{H}(t,\hat{x},u,\hat{\lambda},\hat{\mu}), \label{eq:1} \\
        \Psi_{\mathrm{mod}}(u;t) &:= H(t,x,\hat{x},u,\lambda,\mu), \label{eq:2}
    \end{align}
    \textit{for $u \in \mathcal{C}$.}

    \textit{Assume that there exists $\bar{u} \in \mathcal{C}$ such that}
    \begin{equation}
        \partial_u \hat{H}(t,\hat{x},\bar{u},\hat{\lambda},\hat{\mu})
        =
        \partial_u H(t,x,\hat{x},\bar{u},\lambda,\mu).
        \label{eq:3}
    \end{equation}
    \textit{Then}
    \begin{align}\nonumber
    & 0 \in \partial_u \hat{H}(t,\hat{x},\bar{u},\hat{\lambda},\hat{\mu}) + N_\mathcal{C}(\bar{u})
    \iff \\
    & 0 \in \partial_u H(t,x,\hat{x},\bar{u},\lambda,\mu) + N_\mathcal{C}(\bar{u}).
    \label{eq:4}
    \end{align}
    \textit{Consequently, if either one of the inclusions in \eqref{eq:4} holds, then $\bar{u}$ is a minimizer
    for both pointwise constrained Hamiltonian problems, that is,}
    \begin{equation}
    \bar{u} \in \arg\min_{u \in \mathcal{C}} \hat{H}(t,\hat{x},u,\hat{\lambda},\hat{\mu})
    \cap
    \arg\min_{u \in \mathcal{C}} H(t,x,\hat{x},u,\lambda,\mu).
    \label{eq:5}
    \end{equation}
    \textit{If, in addition, each Hamiltonian is strictly convex in $u$ on $\mathcal{C}$, then each argmin
    is a singleton; hence, the two minimizers are unique and coincide.}
\end{theorem}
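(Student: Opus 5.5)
The argument is short and purely pointwise in $t$; it rests on the first-order characterization of constrained convex minimizers recalled in the convex analysis preliminaries.

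\textbf{Step 1: the feasible set.} First I will check that $\mathcal{C}$ is nonempty, closed, and convex, so that $N_\mathcal{C}$ is well defined and the optimality criterion applies with $\mathcal{C}$ in place of $\mathcal{U}$. By Assumption~\ref{assum:one}, $\mathcal{U}$ is closed and convex. The sublevel set $\{u : c(\cdot,u)\le 0\}$ is convex because $c(\cdot,u)$ is convex in $u$, and it is closed because $c(\cdot,u)$ is lower semicontinuous. Hence $\mathcal{C}$ is an intersection of closed convex sets, and it is nonempty by hypothesis.

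\textbf{Step 2: the equivalence \eqref{eq:4}.} Since \eqref{eq:3} asserts equality of the two subdifferentials at $\bar u$, the Minkowski sums $\partial_u \hat H(\cdot,\bar u,\cdot)+N_\mathcal{C}(\bar u)$ and $\partial_u H(\cdot,\bar u,\cdot)+N_\mathcal{C}(\bar u)$ are the same set. Therefore $0$ belongs to one exactly when it belongs to the other. No analysis is needed here; it is set equality.

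\textbf{Step 3: the minimizer statement \eqref{eq:5}.} By Assumption~\ref{assum:two}, both $\Psi_{\mathrm{act}}(\cdot;t)$ and $\Psi_{\mathrm{mod}}(\cdot;t)$ are proper, lower semicontinuous, and convex in $u$. The cleanest way to handle the constraint is to consider $\psi = \Psi + \iota_\mathcal{C}$, where $\iota_\mathcal{C}$ is the indicator of $\mathcal{C}$. The standard fact then gives that $\bar u\in\arg\min_{\mathcal{C}}\Psi$ holds if and only if $0\in\partial\Psi(\bar u)+N_\mathcal{C}(\bar u)$. The ``if'' direction is elementary and is the only one needed. If $0=g+\eta$ with $g\in\partial\Psi(\bar u)$ and $\eta\in N_\mathcal{C}(\bar u)$, then for every $v\in\mathcal{C}$,
\[
\Psi(v)\ge\Psi(\bar u)+\langle g,v-\bar u\rangle=\Psi(\bar u)-\langle\eta,v-\bar u\rangle\ge\Psi(\bar u).
\]
Applying this to whichever inclusion holds, and using Step 2 to transfer it to the other, $\bar u$ minimizes both Hamiltonians over $\mathcal{C}$, which is \eqref{eq:5}.

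\textbf{Step 4: uniqueness.} Suppose each Hamiltonian is strictly convex on $\mathcal{C}$ and there are two distinct minimizers $u_1\neq u_2$. Their midpoint lies in the convex set $\mathcal{C}$ and, by strict convexity, has strictly smaller value than the minimum, a contradiction. So each argmin is a singleton. Since $\bar u$ lies in both, the two singletons equal $\{\bar u\}$ and coincide.

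\textbf{Main obstacle.} The only delicate point is the subdifferential sum rule, $\partial(\Psi+\iota_\mathcal{C})=\partial\Psi+N_\mathcal{C}$, which is needed only for the converse (``only if'') direction. That direction requires a qualification condition such as $\operatorname{ri}(\operatorname{dom}\Psi)\cap\operatorname{ri}\mathcal{C}\neq\emptyset$. I will avoid it entirely, because the theorem only needs sufficiency, which the elementary inequality in Step 3 supplies. Well-posedness at the fixed $t$, together with Theorem~\ref{theorem:existence-and-uniqueness}, can be cited if necessity is also wanted.
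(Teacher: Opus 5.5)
Your proposal is correct and follows essentially the same route as the paper. You show that $\mathcal{C}$ is closed and convex, use \eqref{eq:3} to get equality of the sets $\partial_u \hat{H} + N_\mathcal{C}(\bar{u})$ and $\partial_u H + N_\mathcal{C}(\bar{u})$, apply the first-order optimality condition to obtain \eqref{eq:5}, and use strict convexity for uniqueness. The one difference is that the paper invokes the optimality condition as a two-sided equivalence, whereas you prove and use only its elementary sufficiency direction; that is all the theorem needs, and it avoids the constraint qualification that the ``only if'' direction implicitly requires.
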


\begin{proof} Since $\mathcal{U}$ is closed and convex and $c(\cdot,u)$ is convex,
lower semicontinuous, and proper, the feasible set
\[
\mathcal{C} := \{u \in \mathcal{U} : c(\cdot,u) \le 0\},
\]
is closed and convex. By assumption, it is nonempty.

By Assumptions \ref{assum:one}--\ref{assum:two}, both $\Psi_{\mathrm{act}}(\cdot;t)$ and $\Psi_{\mathrm{mod}}(\cdot;t)$ are proper, lower semicontinuous, and convex on $\mathcal{C}$, and the corresponding constrained minimization problems are well posed.

For any proper, lower semicontinuous, convex function $\Psi : \mathcal{U} \to \mathbb{R}\cup\{+\infty\}$ and any nonempty closed convex set $\mathcal{C}$, the standard first-order condition for convex minimization over $\mathcal{C}$ is
\begin{equation}
\bar{u} \in \arg\min_{u \in \mathcal{C}} \Psi(u)
\iff
0 \in \partial \Psi(\bar{u}) + N_\mathcal{C}(\bar{u}).
\label{eq:6}
\end{equation}
Applying \eqref{eq:6} to $\Psi_{\mathrm{act}}(\cdot;t)$ and $\Psi_{\mathrm{mod}}(\cdot;t)$ gives
\begin{align}\nonumber
& \bar{u} \in \arg\min_{u \in \mathcal{C}} \hat{H}(t,\hat{x},u,\hat{\lambda},\hat{\mu})
\iff \\
& 0 \in \partial_u \hat{H}(t,\hat{x},\bar{u},\hat{\lambda},\hat{\mu}) + N_\mathcal{C}(\bar{u}),
\label{eq:7} \\ \nonumber
& \bar{u} \in \arg\min_{u \in \mathcal{C}} H(t,x,\hat{x},u,\lambda,\mu)
\iff \\
& 0 \in \partial_u H(t,x,\hat{x},\bar{u},\lambda,\mu) + N_\mathcal{C}(\bar{u}).
\label{eq:8}
\end{align}
Now \eqref{eq:3} implies that
\[
\partial_u \hat{H}(t,\hat{x},\bar{u},\hat{\lambda},\hat{\mu}) + N_\mathcal{C}(\bar{u})
=
\partial_u H(t,x,\hat{x},\bar{u},\lambda,\mu) + N_\mathcal{C}(\bar{u}).
\]
Hence the two inclusions in \eqref{eq:4} are equivalent. Using \eqref{eq:7}--\eqref{eq:8}, either inclusion
implies that $\bar{u}$ minimizes both Hamiltonians over $C$, which proves \eqref{eq:5}.

If each Hamiltonian is strictly convex in $u$ on $\mathcal{C}$, then each constrained minimization problem admits at most one minimizer. Since \eqref{eq:5} shows that the two argmin sets contain the same element $\bar{u}$, both argmin sets are singletons and equal to $\{\bar{u}\}$. 
Therefore, the minimizers are unique and coincide. 
\end{proof}

\subsection{Specialization to quadratic control effort}
While Theorem~\ref{theorem:two} provides a general equivalence result
under convexity, its hypotheses may be abstract to verify
directly. Next, we specialize in quadratic control effort under which the equivalence becomes explicit and easily verifiable.

\begin{assumption}\label{assum:three}
    The admissible control set $\mathcal{U} \subset \mathbb{R}^m$ is
    nonempty, closed, and convex (possibly unbounded). The
    running cost has the form
    \begin{equation}
    \ell(t,z,u) = \ell_0(t,z) + \frac{1}{2} u^\top R(t) u,
    \label{eq:quadratic_running_cost}
    \end{equation}
    where $\ell_0(t,\cdot)$ is continuous in $z$ and $R(t) \in \mathbb{R}^{m \times m}$ satisfies
    \[
    R(t) \succeq r_{\min} I,
    \]
    for some $r_{\min} > 0$ and all $t \in [0,T]$.
\end{assumption}
Assumption~\ref{assum:three} guarantees uniform strong convexity of the
Hamiltonian with respect to the control input, ensuring the existence and uniqueness of the pointwise optimal control and
well-posedness of the minimization problem over the entire
time horizon. Intuitively, this condition ensures that control
effort is penalized in every direction at all times, so the
optimal control cannot be flat, ill-defined, or sensitive to small
perturbations.

\begin{assumption}\label{assum:four}
    For almost every $t$ and all relevant $(x,\hat{x},\lambda,\hat{\lambda})$,
    the maps
    \begin{align*}
        u & \mapsto \lambda^\top f(t,x,u),
        \quad
        u \mapsto \hat{\lambda}^\top \hat{f}(t,\hat{x},u),\\
        u & \mapsto \mu c(t,\xmodel,u), 
        \quad
        u \mapsto \hat{\mu}c(t,\xreal,u),                     
    \end{align*}
    are convex on $\mathcal{U}$, and satisfy a linear growth bound, i.e., there
    exist locally bounded functions $c_f(t,x)$, $c_{\hat{f}}(t,\hat{x}), c_c(t,\xreal), c_c(t,\xmodel) \ge 0$ such that for all $u \in \mathcal{U}$,
    \begin{align*}
        |\lambda^\top f(t,x,u)| & \le c_f(t,x)(1+\|u\|), \\
        |\hat{\lambda}^\top \hat{f}(t,\hat{x},u)| & \le c_{\hat{f}}(t,\hat{x})(1+\|u\|), \\
        |\hat{\mu}c(t,\xreal,u)| & \leq c_c(t,\xreal)(1+||u||), \\
        |\mu(t,\xmodel,u)| & \leq c_c(t,\xmodel)(1+||u||).
    \end{align*}
\end{assumption}
Assumption~\ref{assum:four} imposes a linear-growth condition on the
control-dependent terms of the Hamiltonian, ensuring coercivity and preventing unbounded descent even when the admissible control set is unbounded. In simple terms, this
condition guarantees that no term in the dynamics or cost can
overpower the quadratic control penalty, so the optimization
does not ``prefer'' arbitrarily large control actions.

\textbf{Lemma 1.} \textit{Under Assumptions \ref{assum:three}--\ref{assum:four}, for almost every $t$ the pointwise minimization problems}
\[
\arg\min_{u \in \mathcal{U}} \hat{H}(t,\hat{x},u,\hat{\lambda},\hat{\mu}),
\qquad
\arg\min_{u \in \mathcal{U}} H(t,x,\hat{x},u,\lambda,\mu),
\]
\textit{admit unique minimizers.}

\begin{proof} Fix $t \in [0,T]$ such that Assumptions \ref{assum:three}--\ref{assum:four} hold (this is the case for almost every $t$). We prove the claim for the
model-based Hamiltonian. The proof for the plant Hamiltonian
is identical.
For fixed $(t,x,\hat{x},\lambda)$, we define the function
\begin{align}\nonumber
\Psi(u) & := H(t,x,\hat{x},u,\lambda,\mu)\\ \nonumber
& = \ell_0(t,x) +\frac{1}{2}u^\top R(t)u \\ \label{eq:Psi_quadratic_model}
& +\beta(t)\|x-\hat{x}\|^2+\lambda^\top f(t,x,u)+\mu c(t,\xmodel,u).
\end{align}
By Assumption \ref{assum:four}, the maps $u \mapsto \lambda^\top f(t,x,u)$ and $u \mapsto \mu c(t,\xmodel,u)$ are convex on $\mathcal{U}$.
Therefore $\Psi$ is convex on $\mathcal{U}$. Moreover, since $R(t) \succeq r_{\min} I$,
the quadratic term is $r_{\min}$-strongly convex on $\mathcal{U}$, hence $\Psi$ is
strongly convex on $\mathcal{U}$ as the sum of a strongly convex function
and two convex functions.

Let $r_{\max}(t) := \|R(t)\|$ and note that
\[
\frac{1}{2}u^\top R(t)u \ge \frac{r_{\min}}{2}\|u\|^2,
\qquad \forall u \in \mathcal{U}.
\]
By Assumption \ref{assum:four},
\begin{align*}
\Psi(u)
&\ge
\left(\ell_0(t,x)+\beta(t)\|x-\hat{x}\|^2\right)
+\frac{r_{\min}}{2}\|u\|^2 \\
&-(c_f(t,x)+c_c(t,\xmodel))(1+\|u\|).
\end{align*}
The right-hand side is a quadratic function of $\|u\|$ with
positive leading coefficient $\frac{r_{\min}}{2}$; therefore,
\[
\|u\| \to \infty,\; u \in \mathcal{U}
\Longrightarrow
\Psi(u) \to +\infty.
\]
Namely, $\Psi$ is coercive on $\mathcal{U}$.\\
Let $m^\star := \inf_{u \in \mathcal{U}} \Psi(u)$
and let $\{u_k\} \subset \mathcal{U}$ be a minimizing
sequence with $\Psi(u_k) \downarrow m^\star$.
Coercivity implies that $\{u_k\}$ is bounded.
Otherwise $\|u_k\| \to \infty$ along a subsequence would
force $\Psi(u_k) \to +\infty$, contradicting $\Psi(u_k) \downarrow m^\star < +\infty$.

Since $\{u_k\}$ is bounded, there exists a subsequence (not
relabeled) and $\bar{u} \in \mathbb{R}^m$ such that $u_k \to \bar{u}$. Because $\mathcal{U}$ is
closed, $\bar{u} \in \mathcal{U}$. Finally, $\Psi$ is convex (hence continuous on
the relative interior of $\mathcal{U}$) and, under the present assumptions,
lower semicontinuous on $\mathcal{U}$; thus
\[
\Psi(\bar{u}) \le \liminf_{k \to \infty} \Psi(u_k) = m^\star,
\]
which yields $\Psi(\bar{u}) = m^\star$. Therefore, $\bar{u}$ is a minimizer and the
argmin set is nonempty.

Because $\Psi$ is strongly convex on $\mathcal{U}$, it admits at most one
minimizer on $\mathcal{U}$. Indeed, if $u_1 \neq u_2$ were both minimizers,
then for any $\theta \in (0,1)$ strong convexity would imply
\[
\Psi(\theta u_1 + (1-\theta)u_2)
<
\theta \Psi(u_1) + (1-\theta)\Psi(u_2)
=
m^\star,
\]
a contradiction. Hence, the minimizer is unique. 
\end{proof}

\begin{theorem}\label{theorem:quadratic-effort}
    \textit{Suppose Assumptions \ref{assum:three}--\ref{assum:four} hold. Let $(\hat{x}^\ast,\hat{\lambda}^\ast,\hat{\mu}^\ast,u^\ast)$ satisfy the PMP conditions for the plant problem (Problem~\ref{prob:plant-optimal}), and let
    $(x^\circ,\lambda^\circ,\mu^\circ,u^\circ)$ satisfy the PMP conditions for the model-based penalized problem (Problem~\ref{prob:model-based}).
    Suppose that, for almost every $t \in [0,T]$,}
    \begin{align} \nonumber
        &\nabla_u \left[\ell(t,\hat{x}^\ast,u)
        +
        \hat{\lambda}^{\ast\top}\hat{f}(t,\hat{x}^\ast,u)
        + 
        \hat{\mu}^\ast c(t,\hat{x}^\ast,u)\right] = \\
        & \nabla_u \left[\ell(t,x^\circ,u)
        +
        \lambda^{\circ\top}f(t,x^\circ,u)
        + 
        \mu^\circ c(t,x^\circ,u)\right],
        \label{eq:switching_gradient_match}
    \end{align}
    \textit{at $u=u^\circ$, and that the state alignment holds:}
    \begin{equation}
        x^\circ(t) = \hat{x}^\ast(t)
        \qquad \text{for a.e. } t \in [0,T].
        \label{eq:state_alignment}
    \end{equation}
    \textit{Then}
    \[
    u^\circ(t) = u^\ast(t),
    \qquad \text{for a.e. } t \in [0,T].
    \]
    \textit{A sufficient set of verifiable conditions implying \eqref{eq:switching_gradient_match} is:}
    \begin{align}\nonumber
        \nabla_u \hat{f}(t,\hat{x}^\ast(t),u)
        & =
        \nabla_u f(t,x^\circ(t),u)
        \quad \text{for all } u \in \mathcal{U},\; \text{a.e. } t,\\ \nonumber
        \lambda^\circ(t) & = \hat{\lambda}^\ast(t) \;\text{a.e.}, \\
        \mu^\circ(t) & = \hat{\mu}^\ast(t) \;\text{a.e.}
        \label{eq:sufficient_conditions_grad_match}
    \end{align}
\end{theorem}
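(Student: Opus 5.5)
The plan is to reduce the claim to Theorem~\ref{theorem:two} applied pointwise in time, with uniqueness supplied by Lemma~1. Fix $t$ in the full-measure set where the PMP conditions for both problems hold, Assumptions~\ref{assum:three}--\ref{assum:four} hold, and \eqref{eq:switching_gradient_match}--\eqref{eq:state_alignment} are valid. Write $\Psi_{\mathrm{act}}(u)=\hat{H}(t,\hat{x}^\ast,u,\hat{\lambda}^\ast,\hat{\mu}^\ast)$ and $\Psi_{\mathrm{mod}}(u)=H(t,x^\circ,\hat{x}^\ast,u,\lambda^\circ,\mu^\circ)$. In the second expression the plant state entering the penalty is the observed one, which I will identify with $\hat{x}^\ast$.

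The key observation is that the penalty term $\beta\|x-\hat{x}\|^2$ and the term $\ell_0$ do not depend on $u$. Hence
\[
\nabla_u \Psi_{\mathrm{mod}}(u)=\nabla_u\bigl[\ell(t,x^\circ,u)+\lambda^{\circ\top}f(t,x^\circ,u)+\mu^\circ c(t,x^\circ,u)\bigr],
\]
and the analogous identity holds for $\Psi_{\mathrm{act}}$. By \eqref{eq:switching_gradient_match}, the two gradients agree at $\bar{u}=u^\circ$. Both maps are convex by Assumption~\ref{assum:four}, and I take them to be differentiable, as the hypothesis implicitly assumes. For such functions the subdifferentials reduce to singletons, so condition \eqref{eq:3} holds at $\bar{u}=u^\circ$.

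By the PMP minimization condition \eqref{eq:pmp-model-control} together with the convex optimality condition \eqref{eq:6}, we have $0\in\nabla_u\Psi_{\mathrm{mod}}(u^\circ)+N_{\mathcal{C}}(u^\circ)$, where $\mathcal{C}$ is the pointwise feasible control set. The state alignment \eqref{eq:state_alignment} is what makes this work: it ensures that the feasible sets $\{u\in\mathcal{U}:c(t,x^\circ,u)\le0\}$ and $\{u\in\mathcal{U}:c(t,\hat{x}^\ast,u)\le0\}$ coincide, so the same normal cone $N_{\mathcal{C}}(u^\circ)$ appears in both problems. Theorem~\ref{theorem:two} then yields that $u^\circ$ minimizes $\Psi_{\mathrm{act}}$ as well. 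Lemma~1 supplies strong convexity, so the plant minimizer is unique, and since $u^\ast(t)$ is also a minimizer by \eqref{eq:pmp-plant-control}, we get $u^\ast(t)=u^\circ(t)$. Because this holds on a full-measure set of $t$, the conclusion holds for almost every $t$.

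For the sufficient conditions, I expand both sides of \eqref{eq:switching_gradient_match} at $u=u^\circ$:
\begin{align*}
&R(t)u^\circ+\nabla_u\hat{f}(t,\hat{x}^\ast,u^\circ)^\top\hat{\lambda}^\ast+\hat{\mu}^\ast\nabla_u c(t,\hat{x}^\ast,u^\circ) \quad\text{versus}\\
&R(t)u^\circ+\nabla_u f(t,x^\circ,u^\circ)^\top\lambda^\circ+\mu^\circ\nabla_u c(t,x^\circ,u^\circ).
\end{align*}
The quadratic terms are identical. The dynamics terms match once $\nabla_u\hat{f}=\nabla_u f$ and $\lambda^\circ=\hat{\lambda}^\ast$. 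The constraint terms match once $\mu^\circ=\hat{\mu}^\ast$, combined with state alignment, which makes $\nabla_u c$ evaluated at the same state.

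The main obstacle I expect is the bookkeeping around the constraint set. Lemma~1 and Assumption~\ref{assum:two} are phrased over $\mathcal{U}$ with the multiplier term placed inside the Hamiltonian, whereas Theorem~\ref{theorem:two} works over $\mathcal{C}$. I would handle this by noting two facts. First, $\mathcal{C}$ is closed and convex, since it is a sublevel set of a convex lower semicontinuous function intersected with $\mathcal{U}$. Second, the argument of Lemma~1 (coercivity plus strong convexity) carries over verbatim with $\mathcal{C}$ in place of $\mathcal{U}$, so uniqueness of the minimizer over $\mathcal{C}$ also holds. If instead one reads the PMP minimization as taken over $\mathcal{U}$ with the multiplier absorbing the constraint, the same argument applies with $N_{\mathcal{U}}$ replacing $N_{\mathcal{C}}$.
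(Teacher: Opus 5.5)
Your proposal is correct and is essentially the paper's proof: the paper fixes a time where Lemma~1 gives uniqueness, notes that the penalty term drops out of $\nabla_u$, substitutes the matched gradient at $u^\circ$ into the variational inequality for $\Psi_{\mathrm{mod}}$ over $\mathcal{U}$ to conclude that $u^\circ$ minimizes $\Psi_{\mathrm{act}}$, and then invokes uniqueness. Your detour through Theorem~\ref{theorem:two} with singleton subdifferentials is the same step in different packaging. Of your two readings, keep the $N_{\mathcal{U}}$ one, which is what the paper uses. The $\mathcal{C}$ version needs convexity of $u\mapsto c(t,x,u)$, both for $\mathcal{C}$ to be convex and for the minimizer over it to be unique, and Theorem~\ref{theorem:quadratic-effort} does not provide this: Assumption~\ref{assum:four} only makes $u\mapsto\mu c$ convex, which says nothing when the multiplier vanishes.
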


\begin{proof} We prove that $u^\circ(t)=u^\ast(t)$ for almost every $t \in
[0,T]$. The proof is pointwise in time and relies on (i) uniqueness of the pointwise Hamiltonian minimizers (Lemma 1) and
(ii) the switching-gradient matching condition \eqref{eq:switching_gradient_match}.

By Lemma 1, under Assumptions~\eqref{assum:three}--\eqref{assum:four}, for almost every $t \in [0,T]$, the pointwise minimization problems
\begin{align*}
    &\arg\min_{u \in \mathcal{U}} \hat{H}\bigl(t,\hat{x}^\ast(t),u,\hat{\lambda}^\ast(t),\hat{\mu}^\ast(t)\bigr)
    \quad \text{and} \\
    &\arg\min_{u \in \mathcal{U}} H\bigl(t,x^\circ(t),\hat{x}^\ast(t),u,\lambda^\circ(t),\mu^\circ(t)\bigr),
\end{align*}
admit unique minimizers. We fix such a time $t$ and suppress
the explicit dependence on $t$ in the notation.

We define the pointwise objective functions
\begin{align}
\Psi_{\mathrm{act}}(u) &:= \hat{H}(\hat{x}^\ast,u,\hat{\lambda}^\ast,\hat{\mu}^\ast), \label{eq:Psi_act_thm3}\\
\Psi_{\mathrm{mod}}(u) &:= H(x^\circ,\hat{x}^\ast,u,\lambda^\circ,\mu^\circ). \label{eq:Psi_mod_thm3}
\end{align}
Then, by definition of the PMP minimization conditions,
\begin{equation}
u^\ast = \arg\min_{u \in \mathcal{U}} \Psi_{\mathrm{act}}(u),
\qquad
u^\circ = \arg\min_{u \in \mathcal{U}} \Psi_{\mathrm{mod}}(u).
\label{eq:argmins_thm3}
\end{equation}

Under Assumptions~\eqref{assum:three}--\eqref{assum:four}, both $\Psi_{\mathrm{act}}$ and $\Psi_{\mathrm{mod}}$ are convex and
(by the standing smoothness conditions in the PMP setup)
differentiable in $u$. Therefore, the unique minimizer $u^\circ$ of
$\Psi_{\mathrm{mod}}$ satisfies the variational inequality
\begin{equation}
\langle \nabla_u \Psi_{\mathrm{mod}}(u^\circ),\, v-u^\circ \rangle \ge 0,
\qquad \forall v \in \mathcal{U}.
\label{eq:vi_model}
\end{equation}
Similarly, $u^\ast$
is characterized by the corresponding variational
inequality for $\Psi_{\mathrm{act}}$:
\begin{equation}
\langle \nabla_u \Psi_{\mathrm{act}}(u^\ast),\, v-u^\ast \rangle \ge 0,
\qquad \forall v \in \mathcal{U}.
\label{eq:vi_plant}
\end{equation}

By the definitions of the Hamiltonians,
\begin{align} \nonumber
    \nabla_u \Psi_{\mathrm{act}}(u)
    &=
    \nabla_u \Big[
        \ell\bigl(t,\hat{x}^\ast,u\bigr)
        + \hat{\lambda}^{\ast\top}\hat{f}\bigl(t,\hat{x}^\ast,u\bigr) \\
    &\quad
        + \hat{\mu}^\ast c(t,\xreal^\ast,u)
    \Big],
    \label{eq:grad_Psi_act}
\end{align}
\begin{align} \nonumber
    \nabla_u \Psi_{\mathrm{mod}}(u)
    &=
    \nabla_u \Big[
        \ell\bigl(t,x^\circ,u\bigr)
        + \lambda^{\circ\top} f\bigl(t,x^\circ,u\bigr) \\
    &\quad
        + \mu^\circ c(t,\xmodel^\circ,u)
    \Big],
    \label{eq:grad_Psi_mod}
\end{align}
where we have used the penalty term
$\beta\|x^\circ-\hat{x}^\ast\|^2$
does not depend explicitly on $u$ at fixed $(t,x^\circ,\hat{x}^\ast)$, and
hence does not contribute to $\nabla_u \Psi_{\mathrm{mod}}$.

From the hypothesis,
\begin{equation}
\nabla_u \Psi_{\mathrm{act}}(u)\big|_{u=u^\circ}
=
\nabla_u \Psi_{\mathrm{mod}}(u)\big|_{u=u^\circ}.
\label{eq:equal_gradients_at_uo}
\end{equation}
(Condition \eqref{eq:state_alignment} ensures that the state arguments appearing in the two gradients are evaluated consistently along the relevant
trajectory.)

Substituting \eqref{eq:equal_gradients_at_uo} into the variational inequality \eqref{eq:vi_model} yields
\begin{equation}
\langle \nabla_u \Psi_{\mathrm{act}}(u^\circ),\, v-u^\circ \rangle \ge 0,
\qquad \forall v \in \mathcal{U}.
\label{eq:vi_common}
\end{equation}
Since $\Psi_{\mathrm{act}}$ is convex and differentiable on the closed convex
set $\mathcal{U}$, the variational inequality \eqref{eq:vi_common} is equivalent to the
statement that $u^\circ$
is a minimizer of $\Psi_{\mathrm{act}}$ over $\mathcal{U}$, i.e.,
\[
u^\circ \in \arg\min_{u \in \mathcal{U}} \Psi_{\mathrm{act}}(u).
\]
But by Lemma 1, this argmin set is the singleton $\{u^\ast\}$. Therefore,
\[
u^\circ = u^\ast.
\]
The argument above holds for every $t$ at which the pointwise
minimizers are unique and the matching condition \eqref{eq:switching_gradient_match} holds.
From the hypothesis, \eqref{eq:switching_gradient_match} and \eqref{eq:state_alignment} hold for almost every $t$,
and by Lemma 1, uniqueness holds for almost every $t$. Hence,
\[
u^\circ(t) = u^\ast(t)
\qquad \text{for a.e. } t \in [0,T],
\]
which completes the proof.

Finally, the sufficient conditions \eqref{eq:sufficient_conditions_grad_match} imply \eqref{eq:switching_gradient_match} by direct
substitution into \eqref{eq:grad_Psi_act}--\eqref{eq:grad_Psi_mod}, since equality of $\nabla_u \hat{f}$ and $\nabla_u f$
(together with $\lambda^\circ = \hat{\lambda}^\ast$, $\mu^\circ = \hat{\mu}^\ast$ and $x^\circ = \hat{x}^\ast$) yields equality of the
Hamiltonian gradients at $u=u^\circ(t)$. 
\end{proof}

\section{Cruise Control Example}\label{sec:cruise-control}
In this section, we apply our framework to a cruise control application using the \limo\, \rostwo\, robots \cite{limo}. Specifically, we consider an ego \limo\, that we control, and another \limo\, in the front that we do not control.

\subsection{Plant and model dynamics}
Let $\xreal \doteq [\hat{p}\,\,\, \hat{v}]^\mathrm{T} \in \Real^2$ denote the state of the plant consisting of the position and velocity of the ego \limo. The admissible control space is the one-dimensional simplex on the real line defined by the minimum and maximum admissible control input, i.e., $\controlspace \doteq [u_{\mathrm{min}},\,\, u_{\mathrm{max}}]$. The plant is assumed to follow double integrator dynamics with first-order actuation lag,
\begin{equation}\label{eq:cc-plant-dynamics}
	\xrealdot =
    \begin{bmatrix}
        0 & 1 \\
        0 & -\frac{1}{\hat{\tau}}
    \end{bmatrix}
    \xreal
    +
    \begin{bmatrix}
        0\\
        \frac{\hat{k}}{\hat{\tau}}
    \end{bmatrix}
    u, \quad \xreal(0) = x_0,
\end{equation}
with actuation gain $\hat{k} >0$ and delay time constant $\hat{\tau} >0$, and is constrained to satisfy 
\begin{equation}\label{eq:cc-plant-constraint}
	c(\xreal) \doteq \delta - \xi(p_\mathrm{f}-\hat{p})\leq 0, \quad \mathrm{for\,\, all\,\,} t \in [0,T],
\end{equation}
where $\delta >0$ is the safety distance from the front car, $p_\mathrm{f} \in \Real$ the position of the front car and $\xi>0$ the reaction time coefficient.

Now, let $\xmodel \doteq [p\,\,\, v]^\mathrm{T} \in \Real^2$ denote the state of the model that we have access to, with dynamics given by
\begin{equation}\label{eq:cc-model-dynamics}
	\xmodeldot =
    \begin{bmatrix}
        0 & 1 \\
        0 & -\frac{1}{\tau}
    \end{bmatrix}
    \xmodel
    +
    \begin{bmatrix}
        0\\
        \frac{k}{\tau}
    \end{bmatrix}
    u, \quad \xmodel(0) = x_0,
\end{equation}
and the corresponding constraint
\begin{equation}\label{eq:cc-model-constraint}
	c(\xmodel) \doteq \delta - \xi(p_\mathrm{f}-p)\leq 0, \quad \mathrm{for\,\, all\,\,} t \in [0,T].
\end{equation}

\subsection{Cost function and constraints}
The performance of the plant is evaluated through
\begin{equation}\label{eq:cc-plant-cost-functional}
    \Jactual = \int_{0}^{T} [q(\hat{v}-\vref)^2 + r \hat{a}^2]\,dt + h[\hat{v}(T)-\vref]^2,
\end{equation}
where $q, r, h >0$, which penalizes deviation from the reference velocity \vref \, and excessive acceleration 
\begin{equation}\label{eq:cc-plant-acceleration}
    \hat{a} \doteq \frac{\hat{k}u-\hat{v}}{\hat{\tau}} = \dot{\hat{v}}.
\end{equation}
The cost functional that the model-based surrogate problem with penalized cost considers is
\begin{equation}\label{cc-model-cost-functional}
    \Jmodel = \int_{0}^{T} [q(v-\vref)^2 + r a^2]\,dt + h[v(T)-\vref]^2.
\end{equation}

\subsection{Hamiltonian minimization and control laws}
We now derive the control strategy of the model-based surrogate problem. The Hamiltonian is
\begin{align}\nonumber
    H & = q(v-\vref)^2 + r a^2 + \lambda_1 v + \lambda_2 a + \mu c(\xmodel, u) \\ \label{eq:cc-model-hamiltonian}
    & + \beta_1(p - \hat{p})^2  + \beta_2(v - \hat{v})^2,
\end{align}
where $c(x,u)$ is the second time derivative of \eqref{eq:cc-model-constraint}. Importantly, the penalty terms do not depend on $u$ and therefore do not affect the pointwise minimization of the Hamiltonian with respect to the control input.
The time that the safety constraint becomes active for the first time is
\begin{equation}\label{cc-time}
    t_s \doteq \{\operatorname*{arg\,min}_{t \in [0, T]} t, \quad c(\xmodel) = 0\}.
\end{equation}
\textbf{Case 1: Inactive safety constraint}

For $t \in [0, t_s]$, based on the optimality conditions \eqref{eq:pmp-model-state}--\eqref{eq:pmp-model-control}, the optimal unconstrained state, costate, and control input trajectories satisfy the following system of differential equations.
\begin{align}\label{eq:cc-model-control-condition}
    \lambda_2 & = - 2ra, \\ \label{eq:cc-model-lambda-one-condition}
    \dot{\lambda}_1 & = -2\beta_1(p - \hat{p}), \quad \lambda_1(T) = 0, \\ \label{eq:cc-model-lambda-two-condition}
    \dot{\lambda}_2 & = -2q(v-v_{\mathrm{ref}})-\lambda_1 + \frac{2ra+\lambda_2}{\tau}-2\beta_2(v-\hat{v}),
\end{align}
with boundary condition $\lambda_2(T) = 2h(v(T)-v_{\mathrm{ref}})$ and \eqref{eq:cc-model-dynamics}. Using the state alignment condition of Theorem~\ref{theorem:quadratic-effort}, \eqref{eq:cc-model-lambda-one-condition} yields $\lambda_1 = 0$. Using this, \eqref{eq:cc-model-control-condition} and the state alignment condition, \eqref{eq:cc-model-lambda-two-condition} becomes
\begin{equation}\label{eq:cc-model-lambda-two-condition-dev}
    \dot{\lambda}_2 = -2q(v-v_{\mathrm{ref}}).
\end{equation}
By differentiating now \eqref{eq:cc-model-control-condition} with respect to time and using \eqref{eq:cc-model-lambda-two-condition-dev} yields
\begin{equation}
    \ddot{v} = \frac{q}{r}(v - v_{\mathrm{ref}}),
\end{equation}
which is a second-order ordinary differential equation of the form
\begin{equation}\label{eq:cc-model-ode}
    \ddot{w} = \omega^2 w,
\end{equation}
with $w \doteq v - v_{\mathrm{ref}}$ and $\omega \doteq \sqrt{\frac{q}{r}}$. The analytical solution to \eqref{eq:cc-model-ode} for initial condition $w_0 \doteq v(0) - v_{\mathrm{ref}}$ yields the closed-form optimal unconstrained control input and velocity trajectory,
\begin{align}\label{eq:cc-model-optimal-control}
    u^*(t) & = \frac{v^*(t) + \tau\omega[w_0\sinh(\omega t)+B\cosh(\omega t)]}{k}, \\ \label{eq:cc-model-optimal-velocity}
    v^*(t) & = v_{\mathrm{ref}} + w_0\cosh(\omega t) + B\sinh(\omega t),
\end{align}
for $t \in [0, t_s]$, where 
\begin{equation}\label{eq:cc-B}
B = -\omega_0 \frac{
    \frac{h}{r}\cosh(\omega T) + \omega \sinh(\omega T)
}{
    \omega \cosh(\omega T) + \frac{h}{r}\sinh(\omega T)
}.
\end{equation}
\textbf{Case 2: Active safety constraint}

For $t \in [t_s, T]$, the active constraint equation $c(\xmodel, u) = 0$ along with the tangency conditions \eqref{eq:pmp-model-tangency} yield the optimal constrained control input trajectory,
\begin{equation}\label{eq:cc-model-optimal-control-boundary}
    u^*(t) = \frac{\dot{p}_{\mathrm{f}}(t) + \tau \ddot{p}_{\mathrm{f}}(t)}{k},
\end{equation}
for $t \in [t_s, T]$. In this case, the optimal strategy is to copy the velocity and acceleration profile of the \limo\, car in front, and thereby "ride" the constraint. 

By considering now the admissible control space, the final PMP control strategy of the model-based surrogate problem is the projection of \eqref{eq:cc-model-optimal-control} and \eqref{eq:cc-model-optimal-control-boundary} to the one-dimensional simplex line, i.e.,
\begin{equation}\label{eq:cc-model-final-control}
    u_{\mathrm{mb}}(t) = \Pi_{[u_{\mathrm{min}},\, u_{\mathrm{max}}]}[u^*(t)], \quad t \in [0, T].
\end{equation}
Now, the Hamiltonian of the original optimal control problem is 
\begin{equation}\label{eq:cc-plant-hamiltonian}
    H = q(\hat{v}-\vref)^2 + r \hat{a}^2 + \hat{\lambda}_1 \hat{v} + \hat{\lambda}_2 \hat{a} + \hat{\mu} c(\xreal, u). \\ 
\end{equation}
Similarly, by applying the optimality conditions \eqref{eq:pmp-plant-state}--\eqref{eq:pmp-plant-tangency}, the final PMP control strategy of the original optimal control problem is 
\begin{equation}\label{eq:cc-plant-final-control}
    u_{\mathrm{opt}}(t) = \Pi_{[u_{\mathrm{min}},\, u_{\mathrm{max}}]}[u^*(t)], \quad t \in [0, T].
\end{equation}
where $u^*(t)$ is given by \eqref{eq:cc-model-optimal-control} and \eqref{eq:cc-model-optimal-control-boundary} but with the real parameters $\hat{\tau}$ and $\hat{k}$ substituted in these expressions instead of $\tau$ and $k$ respectively. Due to this model mismatch, the resulting control strategies differ. However, the penalty terms in \eqref{eq:cc-model-hamiltonian} shape the state and costate evolution of the model-based problem without altering the structure of the Hamiltonian minimization with respect to $u$. As a result, whenever the projected minimizers, $u_{\mathrm{mb}}$ and $u_{\mathrm{opt}}$, coincide, the optimal control trajectories derived from the model and the plant are identical, as illustrated in the next section.

\begin{figure*}[t]
    \centering

    \begin{subfigure}[t]{0.32\textwidth}
        \centering
        \resizebox{\textwidth}{!}{\input{position.tex}}
        \caption{Position over time of front and ego \limo}
        \label{fig:state_equiv}
    \end{subfigure}
    \hfill
    \begin{subfigure}[t]{0.32\textwidth}
        \centering
        \includegraphics[width=\textwidth,keepaspectratio]{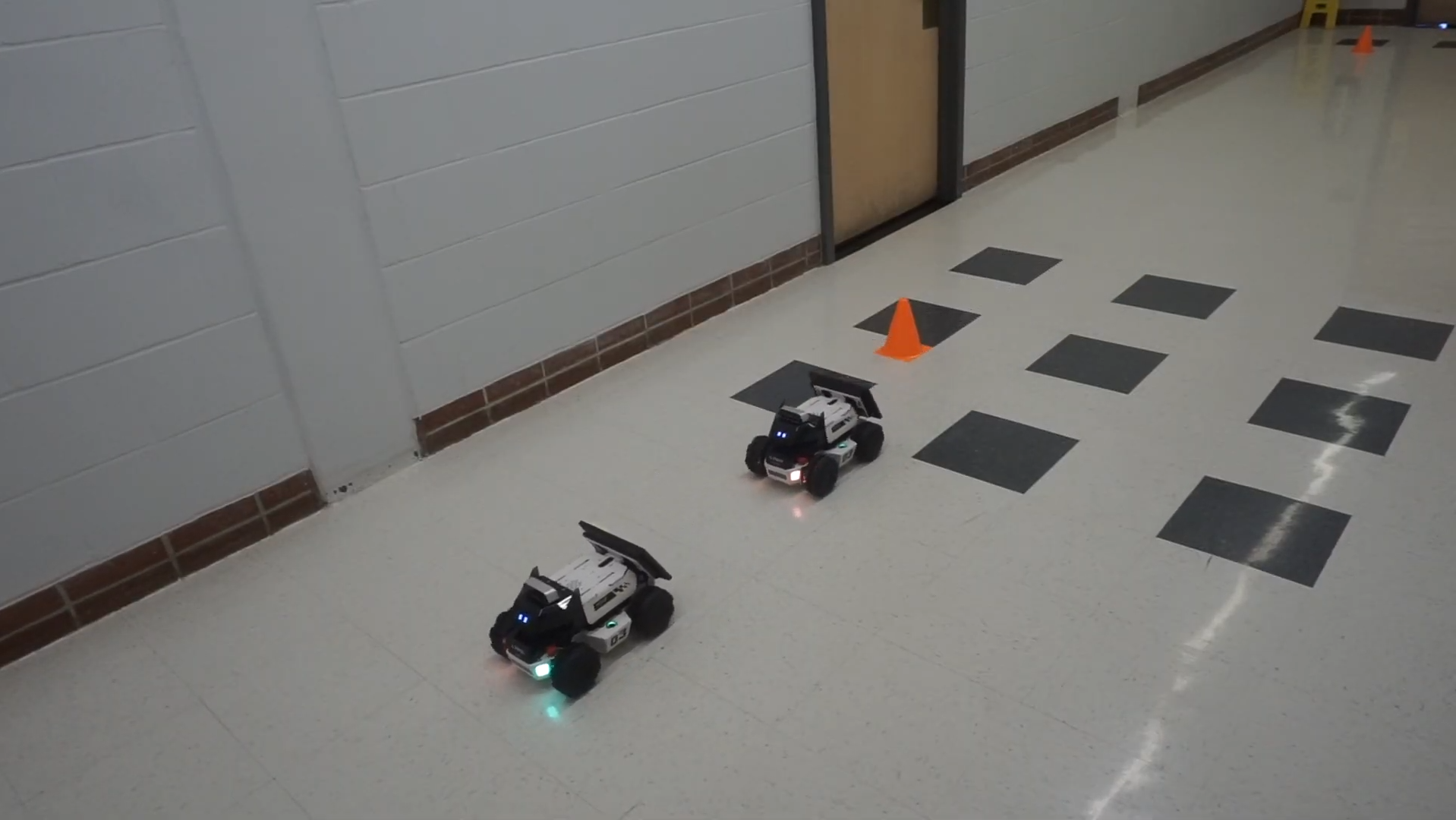}
        \caption{t = $\SI{6}{s}$}
        \label{fig:second}
    \end{subfigure}
    \hfill
    \begin{subfigure}[t]{0.32\textwidth}
        \centering
        \includegraphics[width=\textwidth,keepaspectratio]{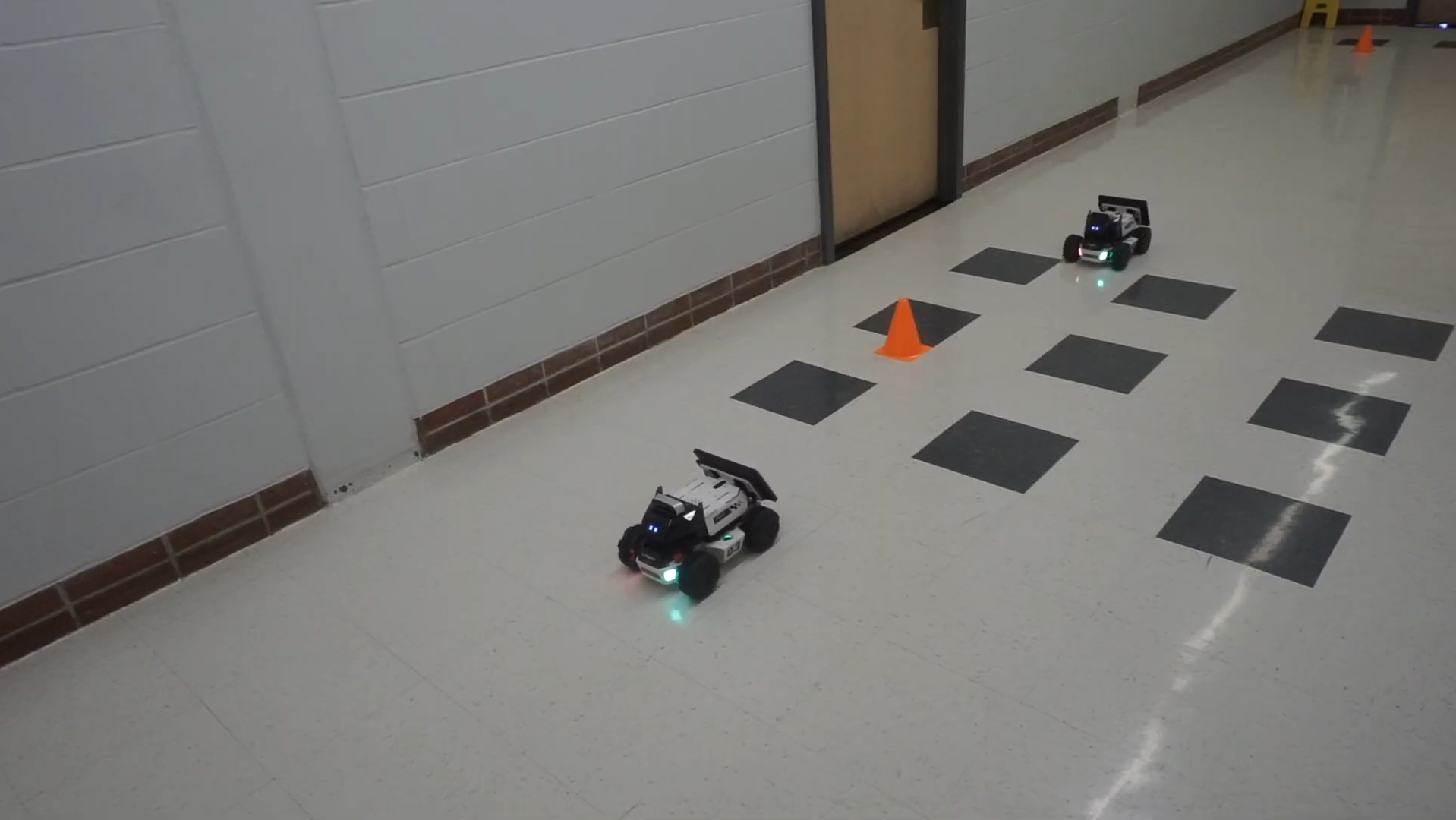}
        \caption{t = $\SI{1}{s}$}
        \label{fig:third}
    \end{subfigure}

    \caption{Position trajectories (Figure~\ref{fig:state_equiv}) and snapshots during one run (Figures~\ref{fig:second} and \ref{fig:third}).}
    \label{fig:main}
\end{figure*}
\subsection{Experimental results}
In this section, we present the experimental results of the cruise control application on the \limo\, robots, which illustrate the equivalence between the optimal control strategies derived in the previous section.\\
The initial state of the ego \limo\, (that we control) is $x_0 = [0.0\,\, 0.5]^{\mathrm{T}}$. The desired reference velocity is $v_{\mathrm{ref}} = \SI{0.6}{m/s}$. The dynamics of the ego \limo\, are given by \eqref{eq:cc-plant-dynamics} with $\hat{\tau} = 0.1$ and $\hat{k} = 1.4$. The model-based controller, however, assumes $\tau = 0.3$ and $k = 1.2$, hence the model-mismatch. The front \limo\, car cruises at a constant speed of \SI{0.1}{m/s} starting from $p_{\mathrm{f}}=
\SI{4.0}{m}$. The constraint parameters are $\delta = \SI{1}{m}$, $\xi = 1$ while the admissible control set is $\mathcal{U} = [0.1\,\,0.4]$. The cost parameters are $q = h = 1$ and $r = 0.5$. The zero order hold sampling time is selected to \SI{0.1}{s} for a control horizon of $T = \SI{15}{s}$.

The framework is implemented in the \rostwo\, system \cite{ros2}. The main controller node implements the model-based penalized control strategy \eqref{eq:cc-model-final-control} and the original optimal control strategy \eqref{eq:cc-plant-final-control}, depending on the desired mode of operation. To implement these strategies, real-time state feedback of the ego \limo\, is required as well as an estimate of the front \limo's position and velocity. For the former, an extended Kalman filter is utilized from the robot localization package \cite{moore2014}. To achieve the latter, in practice, real ACC systems leverage mainly radar measurements. However, since the \limo\, robots are not equipped with this type of sensor, we implement direct communication of the front \limo's position and velocity to the controller of the ego \limo. The code, along with more details on the implementation, is publicly available at \href{https://github.com/Panos20102k/Multi-Limo-Control}{https://github.com/Panos20102k/Multi-Limo-Control}. 

We conduct two runs of the cruise control example, one for the model-based penalized control and one for the original optimal control, and compare the results. The video of these runs is available at \href{https://www.youtube.com/watch?v=pMSZKlU5O44}{https://www.youtube.com/watch?v=pMSZKlU5O44}. Figure~\ref{fig:main} depicts the position trajectories of the \limo\, cars in both runs, as well as specific snapshots during one run. $p_{\mathrm{mb}}$ and $p_{\mathrm{opt}}$ are the position trajectories of the ego \limo\, as a result of the model-based control strategy \eqref{eq:cc-model-final-control} and the original optimal control strategy \eqref{eq:cc-plant-final-control}, respectively. Figure~\ref{fig:control-inputs} depicts the control input trajectories generated by \eqref{eq:cc-model-final-control} and \eqref{eq:cc-plant-final-control}. These figures illustrate that, despite model-mismatch, the model-based penalized control strategy can recover the optimal control strategy. This is because the equivalence of optimal control trajectories follows from the equivalence of the constrained Hamiltonian minimizers, not from equality of the dynamics. Although the gradients of the plant and model Hamiltonians are different (as reflected in the different unconstrained minimizers), the admissible control constraints $u \in \mathcal{U}$ dominate the pointwise minimization. From a theoretical perspective, the figures highlight that 
\begin{equation*}
    \nabla_u \hat{H} \neq \nabla_u H \quad \text{while} \quad \arg\min_{u \in \mathcal{U}} \hat{H} = \arg\min_{u \in \mathcal{U}} H.
\end{equation*}
which is precisely the mechanism underlying the equivalence results of Section~\ref{sec:equivalence-results}.

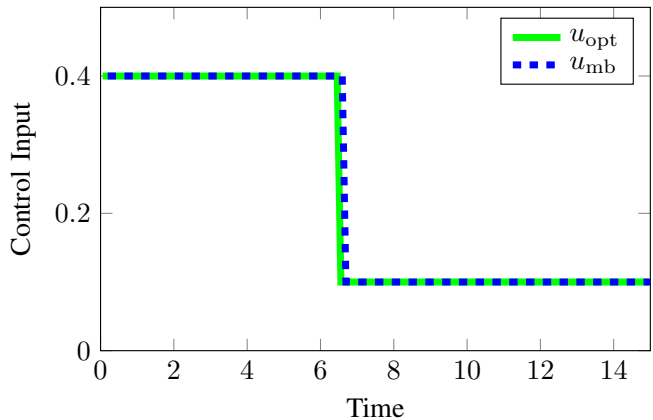
\begin{figure}[t]
    \centering
    \resizebox{\columnwidth}{!}{
%
%
\begin{tikzpicture}

\begin{axis}[%
  width=0.8\columnwidth,
  height=0.5\columnwidth,
  scale only axis,
  xmin=0, xmax=15,
  xlabel={Time},
  xlabel style={font=\large, color=black},
  ymin=0.0, ymax=0.5,
  ylabel={Control Input},
  ylabel style={font=\large, color=black},
  ticklabel style={font=\normalsize},
  title style={font=\Large},
  legend pos=north east,
  legend style={
    legend cell align=left,
    align=left,
    draw=black,
    font=\normalsize
  },
  axis background/.style={fill=white},
  axis lines=box
]
\addplot [color=green, line width=2.5pt]
  table[row sep=crcr]{%
0.0567159429999995	0.4\\
0.156767200999999	0.4\\
0.256679667999999	0.4\\
0.356776164999999	0.4\\
0.456831658	0.4\\
0.556898030999999	0.4\\
0.65681984	0.4\\
0.756937549	0.4\\
0.856937553999999	0.4\\
0.956819895	0.4\\
1.056879816	0.4\\
1.157131035	0.4\\
1.256577611	0.4\\
1.356641723	0.4\\
1.456532186	0.4\\
1.556758173	0.4\\
1.656777172	0.4\\
1.756718226	0.4\\
1.856829413	0.4\\
1.956898496	0.4\\
2.056760331	0.4\\
2.156717644	0.4\\
2.256714315	0.4\\
2.356860949	0.4\\
2.456759969	0.4\\
2.556855016	0.4\\
2.656921098	0.4\\
2.756781251	0.4\\
2.856783169	0.4\\
2.956616731	0.4\\
3.056605564	0.4\\
3.156872107	0.4\\
3.25681276	0.4\\
3.35690241	0.4\\
3.456690505	0.4\\
3.55672255	0.4\\
3.656787081	0.4\\
3.756617802	0.4\\
3.856594932	0.4\\
3.956924395	0.4\\
4.056911246	0.4\\
4.156695333	0.4\\
4.256695542	0.4\\
4.356762311	0.4\\
4.456790405	0.4\\
4.556656651	0.4\\
4.656669797	0.4\\
4.756707509	0.4\\
4.856761813	0.4\\
4.956689706	0.4\\
5.056694696	0.4\\
5.156682536	0.4\\
5.256693697	0.4\\
5.356635034	0.4\\
5.456818987	0.4\\
5.556955667	0.4\\
5.656855499	0.4\\
5.756538303	0.4\\
5.856888225	0.4\\
5.956640744	0.4\\
6.056676781	0.4\\
6.156530464	0.4\\
6.256920507	0.4\\
6.356699885	0.4\\
6.456644792	0.4\\
6.556591222	0.1\\
6.656714454	0.1\\
6.756654753	0.1\\
6.856760275	0.1\\
6.956480539	0.1\\
7.056817899	0.1\\
7.156586259	0.1\\
7.256631284	0.1\\
7.356643985	0.1\\
7.456920414	0.1\\
7.556573732	0.1\\
7.65638559	0.1\\
7.756510694	0.1\\
7.856726876	0.1\\
7.956743747	0.1\\
8.056598412	0.1\\
8.156732777	0.1\\
8.25683352	0.1\\
8.356747383	0.1\\
8.456536963	0.1\\
8.556658125	0.1\\
8.656898884	0.1\\
8.757095911	0.1\\
8.856647242	0.1\\
8.956545626	0.1\\
9.056722127	0.1\\
9.156620682	0.1\\
9.256775178	0.1\\
9.356463025	0.1\\
9.456671511	0.1\\
9.556702237	0.1\\
9.65662089	0.1\\
9.756691696	0.1\\
9.856584162	0.1\\
9.956685467	0.1\\
10.056507205	0.1\\
10.156652222	0.1\\
10.256483017	0.1\\
10.356660792	0.1\\
10.456656838	0.1\\
10.556606994	0.1\\
10.656627813	0.1\\
10.757038121	0.1\\
10.856671899	0.1\\
10.956594471	0.1\\
11.056566509	0.1\\
11.156626427	0.1\\
11.256699592	0.1\\
11.356645264	0.1\\
11.456641694	0.1\\
11.556573634	0.1\\
11.656617293	0.1\\
11.756459271	0.1\\
11.856630237	0.1\\
11.956546342	0.1\\
12.056703487	0.1\\
12.156546864	0.1\\
12.256628532	0.1\\
12.356680021	0.1\\
12.456586161	0.1\\
12.556795791	0.1\\
12.656707173	0.1\\
12.756639404	0.1\\
12.856531575	0.1\\
12.956492763	0.1\\
13.056536627	0.1\\
13.156699296	0.1\\
13.256760165	0.1\\
13.356954769	0.1\\
13.45671183	0.1\\
13.556818394	0.1\\
13.656761659	0.1\\
13.756422484	0.1\\
13.85658594	0.1\\
13.956719564	0.1\\
14.056803194	0.1\\
14.156505096	0.1\\
14.256698146	0.1\\
14.356549821	0.1\\
14.456691733	0.1\\
14.556482199	0.1\\
14.656753228	0.1\\
14.756690018	0.1\\
14.85650518	0.1\\
14.956650175	0.1\\
};
\addlegendentry{$u_{\mathrm{opt}}$}

\addplot [color=blue, dashed, line width=2.5pt]
  table[row sep=crcr]{%
0.193628069	0.4\\
0.294033884	0.4\\
0.394080308	0.4\\
0.493701697	0.4\\
0.593877204	0.4\\
0.693634531	0.4\\
0.793856419	0.4\\
0.894051092	0.4\\
0.99402973	0.4\\
1.093763037	0.4\\
1.19379238	0.4\\
1.29402982	0.4\\
1.393796945	0.4\\
1.493863516	0.4\\
1.593743597	0.4\\
1.693968622	0.4\\
1.79387948	0.4\\
1.893699253	0.4\\
1.99386677	0.4\\
2.094163755	0.4\\
2.193631065	0.4\\
2.293836155	0.4\\
2.393747553	0.4\\
2.494014343	0.4\\
2.593807481	0.4\\
2.693888434	0.4\\
2.793863187	0.4\\
2.893892398	0.4\\
2.994005962	0.4\\
3.093968745	0.4\\
3.193963484	0.4\\
3.293910865	0.4\\
3.393974931	0.4\\
3.493999838	0.4\\
3.593890245	0.4\\
3.694063549	0.4\\
3.79390909	0.4\\
3.893658307	0.4\\
3.993726545	0.4\\
4.093824753	0.4\\
4.193906161	0.4\\
4.293814857	0.4\\
4.393820406	0.4\\
4.493838078	0.4\\
4.593767148	0.4\\
4.693722183	0.4\\
4.793763979	0.4\\
4.89391401	0.4\\
4.993720147	0.4\\
5.093784048	0.4\\
5.193861043	0.4\\
5.29399545	0.4\\
5.394009809	0.4\\
5.493643997	0.4\\
5.594021164	0.4\\
5.694021474	0.4\\
5.793651815	0.4\\
5.893638912	0.4\\
5.994092578	0.4\\
6.093891082	0.4\\
6.193713524	0.4\\
6.293758756	0.4\\
6.393786024	0.4\\
6.493856125	0.4\\
6.593823585	0.4\\
6.693809681	0.1\\
6.793894997	0.1\\
6.893840476	0.1\\
6.993817654	0.1\\
7.093673686	0.1\\
7.193745502	0.1\\
7.29369956	0.1\\
7.393900784	0.1\\
7.493892453	0.1\\
7.593871319	0.1\\
7.69388474	0.1\\
7.793838751	0.1\\
7.893652568	0.1\\
7.993935865	0.1\\
8.093744513	0.1\\
8.193564526	0.1\\
8.293685933	0.1\\
8.393752848	0.1\\
8.493669327	0.1\\
8.593567728	0.1\\
8.693424179	0.1\\
8.793842321	0.1\\
8.893721873	0.1\\
8.993761705	0.1\\
9.093762453	0.1\\
9.193796377	0.1\\
9.2936886	0.1\\
9.39360057	0.1\\
9.493688433	0.1\\
9.593908579	0.1\\
9.693520149	0.1\\
9.793514079	0.1\\
9.893519534	0.1\\
9.993843044	0.1\\
10.093679322	0.1\\
10.193774239	0.1\\
10.293434526	0.1\\
10.393754529	0.1\\
10.493919201	0.1\\
10.593911677	0.1\\
10.693743107	0.1\\
10.794007242	0.1\\
10.894003336	0.1\\
10.993964441	0.1\\
11.093674001	0.1\\
11.193845222	0.1\\
11.293732723	0.1\\
11.393568413	0.1\\
11.493558432	0.1\\
11.593598544	0.1\\
11.69384204	0.1\\
11.793791785	0.1\\
11.893654172	0.1\\
11.993837709	0.1\\
12.093913251	0.1\\
12.193790332	0.1\\
12.293753856	0.1\\
12.393795582	0.1\\
12.49403372	0.1\\
12.593649844	0.1\\
12.693573856	0.1\\
12.793762509	0.1\\
12.893759096	0.1\\
12.993831789	0.1\\
13.093652358	0.1\\
13.193851584	0.1\\
13.293722109	0.1\\
13.393889711	0.1\\
13.493932827	0.1\\
13.593902231	0.1\\
13.693653412	0.1\\
13.793725761	0.1\\
13.893666639	0.1\\
13.993791066	0.1\\
14.093778077	0.1\\
14.193846906	0.1\\
14.293892873	0.1\\
14.393719883	0.1\\
14.493775812	0.1\\
14.593889508	0.1\\
14.693786604	0.1\\
14.793792785	0.1\\
14.893744307	0.1\\
14.993697049	0.1\\
};
\addlegendentry{$u_{\mathrm{mb}}$}

\end{axis}
\end{tikzpicture}%
}
    \caption{Equivalence of control inputs}
    \label{fig:control-inputs}
\end{figure}

\section{Concluding Remarks}\label{sec:conclusions}
In this paper, we studied the finite-horizon continuous optimal control
problem with safety constraints and unknown plant dynamics. An approximate model is leveraged to synthesize a penalized model-based control strategy. We analyzed the associated Hamiltonian system and established structural conditions under which the constrained Hamiltonian minimizer of the model-based problem coincides with the minimizer of the original plant problem. We demonstrated this equivalence on real hardware experiments of a cruise control application with rear-end safety constraints.

A key insight of this framework is that the penalty term capturing model--plant mismatch influences the state and costate
evolution, but does not explicitly enter into the pointwise minimization of the Hamiltonian with respect to the control input.
This observation allows us to decouple questions of the model
accuracy from control optimality and provides a principled explanation for why approximate models and digital twins
can successfully generate optimal control strategies in practice.

The results of this paper suggest a shift in perspective
for learning-based control. Rather than focusing on exact
system identification, learning efforts can be directed toward
preserving the structural properties that determine Hamiltonian minimization. Ongoing work explores implementing this analysis in stochastic systems.

\balance

\bibliographystyle{IEEEtran}
\bibliography{IEEEabrv,literature_panos,IDS_Publications_03032026}

\end{document}